\newcommand{\E}{\mathbb{E}}
\newtheorem{theorem}{Theorem}
\newtheorem{lemma}[theorem]{Lemma}
\def\squareforqed{\hbox{\rule{2.5mm}{2.5mm}}}
\def\QED{\ifmmode\squareforqed 
  \else{\nobreak\hfil   
    \penalty50                 
    \hskip1em                  
    \null                      
    \nobreak                   
    \hfil                      
    \squareforqed              
    \parfillskip=0pt           
    \finalhyphendemerits=0     
    \endgraf}                  
  \fi}
\def\blksquare{\rule{2mm}{2mm}}
\def\qedsymbol{\blksquare}
\newcommand{\bg}[1]{\medskip\noindent{\bf #1}}
\newcommand{\ed}{{\hfill\qedsymbol}\medskip}
\newenvironment{proof}{\noindent\textbf{Proof.}}{\QED}
\newenvironment{proofof}[1]{\bg{Proof of #1 : }}{\ed}
\newcommand{\R}{\ensuremath{\mathbb R}}
\newcommand{\comment}[1]{}
\newcommand{\junk}[1]{}
\newcommand{\prob}{\mathbb{P}}
\newlength{\tmp} \newlength{\lpsx} \newlength{\lpsy} \newlength{\upsx} \newlength{\upsy}
\newcommand{\bid}{b}
\newcommand{\bids}{{\mathbf \bid}}
\newcommand{\bidsmi}{{\mathbf \bid}_{-i}}
\newcommand{\bidi}[1][i]{{\bid_{#1}}}
\newcommand{\val}{v}
\newcommand{\vals}{{\mathbf \val}}
\newcommand{\valsmi}{{\mathbf \val}_{-i}}
\newcommand{\vali}[1][i]{{\val_{#1}}}
\newcommand{\util}{u}
\newcommand{\utili}[1][i]{{\util_{#1}}}
\newcommand{\dist}{F}
\newcommand{\dists}{{\mathbf \dist}}
\newcommand{\distsmi}{\dists_{-i}}
\newcommand{\price}{p}
\newcommand{\pricei}[1][i]{{\price_{#1}}}
\title{Improved Social Welfare Bounds for GSP at Equilibrium}
\author{Brendan Lucier\thanks{\texttt{blucier@cs.toronto.edu}. Dept. of Computer Science, University of Toronto, Toronto, ON.} \and Renato Paes Leme\thanks{\texttt{renatoppl@cs.cornell.edu}. Dept. of Computer Science, Cornell University, Ithaca, NY.}}
\date{}
\begin{document}

\maketitle

\abstract{
The Generalized Second Price auction is the primary method by which sponsered search advertisements are sold.  We study the performance of this auction under various equilibrium concepts.  In particular, we demonstrate that the Bayesian Price of Anarchy is at most $2(1-1/e)^{-1} \approx 3.16$, significantly improving upon previously known bounds.

Our techniques are intuitively straightforward and extend in a number of ways.  For one, our result extends to a bound on the performance of GSP at coarse correlated equilibria, which captures (for example) a repeated-auction setting in which agents apply regret-minimizing bidding strategies.  
In addition, our analysis is robust against the presence of byzantine agents who cannot be assumed to participate rationally.

Additionally, we present tight bounds for the social welfare obtained at pure NE for the special case of an auction for 3 slots, and discuss potential methods for extending this analysis to an arbitrary number of slots.
}

\section{Introduction}

The sale of advertising space is the primary source of revenue for many providers of online services.  This is due, in part, to the fact that providers can tailor advertisements to the preferences of individual users.  A search engine, for example, can choose to display ads that synergize well with a query being searched.  However, such dynamic provision of content complicates the process of selling ad space to potential advertisers.
The now-standard method has advertisers place bids -- representing the amount they would be willing to pay per click -- which are resolved in an automated auction whenever ads are to be displayed.

By far the most popular bid-resolution method currently in use is the Generalized Second Price auction (GSP), a generalization of the well-known Vickrey auction.  In the GSP, there are multiple ad ``slots'' of varying appeal (i.e.\ slots at the top of the page are more effective).  Advertisers are assigned slots in order of their bids, with the highest bidders receiving the best slots; each advertiser then pays an amount equal to the bid of the next-highest bidder.  While simple to understand and use, the GSP has some notable drawbacks: unlike the Vickrey auction it is not truthful, and it does not generally guarantee the most efficient outcome (i.e.\ the outcome that maximizes social welfare).  Nevertheless, the use of GSP has been extremely successful in practice.  This begs the question: \emph{are there theoretical properties of the Generalized Second Price auction that would explain its prevalence?}

Here we continue the line of work aimed at answering this question by analyzing the performance of GSP under various models of rational agent behaviour.  
First, we consider Bayes-Nash equilibria (BNE) of GSP.  In this model, the auction is viewed as a partial-information game in which each participant's value per click is private information drawn independently at random from commonly-known distributions.  Such a model is particularly relevant for online ad auctions, since historical data can readily be observed to develop accurate market statistics.  A BNE is then a profile of bidding strategies whereby each agent maximizes his expected profit subject to the distribution over the other agents' values.  We study the expected social welfare that GSP attains at any BNE, as a fraction of the optimal social welfare.  This metric is commonly known as the Bayesian Price of Anarchy, representing the loss in efficiency due to having outcomes determined at BNE rather than a benevolent optimizer.

The BNE solution concept captures scenarios in which a large market of advertisers settle into a stable pattern of bidding strategies.  However, empirical studies show that bidding need not stabilize in some cases.  Advertising slot auctions can be repeated millions of times per day, and there are bidding patterns in which agents modify their strategies over time to respond to each others' bids.  To address such cases, one must consider GSP in the broader context of a repeated auction.  In such settings, we assume that an agent's value per click does not change over time, but declared bids can change each round. A solution concept then describes rational behaviour over many instances of the auction (i.e.\ a method of responding to the past play of other agents), and the metric of interest is the average social welfare attained by GSP over many rounds.

We consider an equilibrium model suited to long-run bidder behaviour in GSP.  Namely, we consider settings in which agents choose their bids so that their \emph{regret} vanishes over time.  Roughly speaking, such a model assumes that agents observe the bidding patterns of others and modify their own bids in such a way that their long-term performance approaches that of a single optimal strategy chosen in hindsight.  It is well-known from learning theory that such regret minimization is easy to achieve via simple bidding techniques.  We bound the Price of Total Anarchy, which is the ratio between the social welfare of the optimal allocation and the average social welfare obtained by GSP when agents minimize regret over a sufficiently long number of rounds.


\paragraph{Results}

Our main result is a bound on the social welfare obtained at Bayes-Nash equilbrium for the GSP auction.  Specifically, we show that the Bayesian Price of Anarchy for GSP is at most $2(1-1/e)^{-1} \approx 3.164$.  This improves upon the previous best-known bounds of 8 for BNE and 4 for (mixed) NE \cite{PLT10}. 

Perhaps just as important as the improved bounds, however, is the straightforward and robust nature of our proof.  In particular, our results extend to give the same bound for coarse correlated equilibria, which implies that the Price of Total Anarchy is at most $2(1-1/e)^{-1}$.  
%
Moreover, these results are resilient against the presence of Byzantine agents, in the following sense.  Suppose that, in addition to the rational participants in the auction, there is also some set of agents who apply arbitrary strategies.  We can view these as irrational participants who do not understand how to bid strategically.  
Note that, in such a setting, it is not possible for an auction to guarantee a fraction of the social welfare obtainable from the irrational bidders; after all, a bidder with very large value may decide (irrationally) to bid $0$ and effectively not participate in the auction.  What we \emph{can} show, however, is that the presence of the irrational bidders does not interfere with the auction's ability to approximate the welfare obtainable from the rational bidders. That is, the ratio of the optimal social welfare \emph{of the rational bidders} to the total social welfare obtained at any BNE is at most $2(1-1/e)^{-1} \approx 3.164$.  This result requires an assumption on the play of the irrational bidders, which is that no player bids more than his true value.  We feel that this is a reasonable assumption, as overbidding is a dominated strategy that is easily avoided; we therefore view the irrational bidders as novice or uninformed participants who would avoid dominated strategies, rather than truly adversarial agents.

Our results hold for a standard model of separable click-through rates, where the probability that a user clicks on an advertisement $j$ in slot $i$ is of the form $\alpha_i \gamma_j$.  That is, it is a product of two separable components: one for the advertiser, and one for the slot.  For ease of exposition, we will focus on the special case that $\gamma_j = 1$ for all $j$.  However, we note that our results extend to the more general case of separable click-through rates.

\paragraph{Related work}
In recent years there has been a surge of work on algorithmic mechanism design for sponsored search, beginning with Mehta et al. \cite{jacmMehtaSVV07,focsMehtaSVV05}.  See the survey of Lahaie et al \cite{algorithmgame} for an overview of subsequent developments.  The GSP model applied in this manuscript is due to Edelman et al \cite{edelman07sellingbillions} and Varian \cite{Varian06positionauctions}.

The work most closely related to ours is that of Paes Leme and Tardos, who also study equilibria of GSP \cite{PLT10}.  They give upper bounds on the Price of Anarchy in pure, mixed, and Bayesian strategies; achieving bounds of $1.618$, $4$, and $8$, respectively.  Our main result is a simplification and strengthening of their results for the mixed and Bayesian cases, as well as an extension to different but related solution concepts.

Edelman et al \cite{edelman07sellingbillions} and Varian \cite{Varian06positionauctions} study Envy-free equilibria of GSP (a special case of Nash equilibrium) in the full information setting.  They demonstrate that such equilibria exist, and that all such equilibria are socially optimal.  Gomes and Sweeney \cite{gomes09} study the Generalized Second Price Auction in the Bayesian context. They show that, unlike the full information case, there may not exist symmetric or socially optimal equilibria in this model, and obtain sufficient conditions on click-through-rates that guarantee the existence of a symmetric and efficient equilibrium.  Lahaie \cite{lahaie} also considers the problem of bounding the social welfare obtained at equilibrium, but restricts attention to the special case that click-through-rate $\alpha_i$ decays exponentially along the slots with a factor of $\frac{1}{\delta}$.  Lahaie proves a price of anarchy of $\min\{\frac{1}{\delta}, 1 - \frac{1}{\delta} \}$.

Lucier and Borodin \cite{borodin_lucier} study the Bayesian price of anarchy for greedy combinatorial auctions.
They show via a type of smoothness argument (see \cite{roughgarden}) that a greedy $c$-approximation algorithm can be turned into a mechanism with Price of Anarchy $c+1$ - for pure and mixed Nash and for Bayes-Nash equilibria.
Lucier \cite{lucier_ics} considers repeated greedy auctions and studies the design of mechanisms with bounded price of total anarchy and price of sinking.
These results do not imply bounds for GSP, since it is not a combinatorial auction (and, in particular, GSP does not provide a bidding language expressive enough to implement their mechanisms).  However, the approach taken in our work is similar to the one that drives their results.


The study of regret-minimization goes back to the work of Hannan on repeated two-player games \cite{H-57}.  Kalai and Vempala \cite{KV-05} extend the work of Hannan to online optimization problems, and Kakade et al \cite{KKL-07} further extend to settings of approximate regret minimization.  Blum et al \cite{BHLR-08} apply regret-minimization to the study of inefficiency in repeated games, coining the phrase ``price of total anarchy'' for the worst-case ratio between the optimal objective value and the average objective value when agents minimize regret.


\section{Preliminaries}

We consider an auction with $n$ advertisers and $n$ slots\footnote{we handle
unequal numbers of slots and advertisers by adding
virtual slots with click-through-rate zero or virtual advertisers with zero value per
click.}.
An \emph{outcome} is an assignment of advertisers to slots.  An outcome can be viewed
as a permutation $\pi$ with $\pi(k)$ being the player assigned to slot $k$.  
Being assigned to the
$k$-th slot results in $\alpha_k$ clicks, where $\alpha_1 \geq \alpha_2 \geq \hdots \geq \alpha_n$. 
Each advertiser $i$ has a private
type $v_i$, representing his or her value per click received.  The sequence
$\vals = (\val_1, \dotsc, \val_n)$ is referred to as the \emph{type profile}.
We will write $\valsmi$ for $\vals$ excluding the $i$th entry, so that
$\vals = (\vali, \valsmi)$.

A mechanism for this auction 
elicits a bid $\bidi \in [0,\infty)$ from each agent $i$, which is interpreted
as a type declaration, and returns an assignment as well as a price
$\pricei$ per click for each agent.  If advertiser $i$ is assigned to slot
$j$, his \emph{utility} is $\alpha_j (\vali - \pricei)$, which is the number
of clicks received times profit per click.  The \emph{social welfare}
of outcome $\pi$ is $SW(\pi,\vals) = \sum_j \alpha_j \val_{\pi(j)}$, the total value
of the solution for the participants.  The optimal social welfare is
$OPT(\vals) = \max_\pi SW(\pi,\vals)$.  

We focus on a particular mechanism, the Generalized Second Price auction, 
which works as follows.
Given bid profile $\bids$, the auction sets $\pi(k)$ to be
the advertiser with the $k$th highest bid (breaking ties arbitrarily).  
That is, GSP assigns slots with higher click-through-rate to agents with higher bids.  
Payments are then set according to $\pricei = \bid_{\pi(\pi^{-1}(i)+1)}$.  That is, the payment
of the $k$th highest bidder is precisely the bid of the next-highest bidder (where
we take $\bid_{n+1} = 0$).  We will write 
$\utili(\bids)$ for the utility derived by agent $i$ from the GSP
when agents bid according to $\bids$.

For the remainder of the paper, we will write $\pi(\bids,j)$ to be the player assigned
to slot $j$ by GSP when the agents bid according to $\bids$.  We will also
write $\sigma(\bids,i)$ for the slot assigned to bidder $i$ by GSP, again when agents
bid according to $\bids$.  We write $\pi^i(\bidsmi,j)$ to be the player that would
be assigned to slot $j$ by GSP if agent $i$ did not participate in the auction.
We will write $\nu(\vals)$ for the optimal assignment of slots to bidders for
value profile $\vals$, so that $\nu(\vals,i)$ is the slot that would be allocated to 
agent $i$ in the optimal assignment\footnote{We note that, since GSP makes the
optimal assignment for a given bid declaration, we actually have that $\nu(\vals,i)$
and $\sigma(\vals,i)$ are identically equal.  We define $\nu$ mainly for use when
emphasizing the distinction between an optimal assignment for a value profile and
the assignment that results from a given bid profile.}.  

\subsection{Pure and Mixed Nash Equilibrium}

A \emph{(pure) strategy} for agent $i$ is a function $b_i : \R_{\geq 0} \to \R_{\geq 0}$
that maps each private value to a declared bid.  A \emph{mixed strategy}
maps a private value to a distribution over bids, corresponding to a randomized
declaration.  

We will make the standard assumption that agents apply strategies that never 
overbid. Thatis, we restrict our attention to strategies in which $b_i(\vali)$ 
assigns probability $0$ to all bids larger than $\vali$, for all $i$ and $\vali$.  
This assumption is motivated by the fact that overbidding is a dominated strategy:
an agent's expected utility can only increase by replacing a bid larger than $\vali$
with a bid of $\vali$.

Given a value profile $\vals$, we say that strategy profile
$\bids$ is a mixed Nash equilibrium if, for all $i$ and all alternative strategies $b_i'(\cdot)$,
\[
\E u_i(b_i(v_i), b_{-i}(v_{-i})) \geq \E u_i(b_i'(v_i), b_{-i}(v_{-i})).
\]
That is, each agent $i$ maximizes his utility by bidding according to strategy $\bid_i'(\cdot)$.  We say this
is a pure Nash equilibrium if, in addition, all strategies are pure.
We define the \emph{(mixed) Price of Anarchy} to be the worst-case ratio between social welfare in the optimum and expected social welfare in GSP across all valuation profiles and all mixed Nash equilibria:
$$\sup_{\vals, \bids(\cdot) NE} \frac{OPT(\vals)}{\E_{b}[SW(\pi(\bids(\vals)),\vals)]}.$$

\subsection{Bayesian setting}

In a Bayesian setting, we suppose that each agent's type is drawn from
a publicly known distribution.  That is, $\vals \sim \dists$ where
$\mathbf{F} = F_1 \times F_2 \times \dotsc \times F_n$.  We then say that
strategy profile $\mathbf{b}$ is a Bayes-Nash equilibrium for distributions
$\mathbf{F}$ if, for all $i$, all $v_i$, and all alternative strategies $b_i'$,
\[
\E_{\valsmi \sim \distsmi}[\utili(\bidi(\vali), \bidsmi(\valsmi))]
\geq 
\E_{\valsmi \sim \distsmi}[\utili(\bid'_i(\vali), \bidsmi(\valsmi))]
\]
That is, each agent maximizes his expected utility by bidding in accordance with strategy $b_i(\cdot)$,
where expectation is taken over the distribution of the other agents' types and any randomness in
their strategies.
We define the \emph{Bayes-Nash Price of Anarchy} to be the worst-case ratio between social welfare in the optimum and social welfare in GSP across all distributions and all Nash equilibria:
$$\max_{\dists, \bids(\cdot) BNE} \frac{\E_{\vals \sim \dists}[OPT(\vals)]}{\E_{\vals \sim \dists, \bids(\vals)}[SW(\pi(\bids(\vals)),\vals]}.$$

\subsection{Repeated Auctions}


We now turn to repeated versions of GSP.  In this setting, the GSP auction is run $T \geq 1$ times with the same slots and agents.  The private value profile $\mathbf{v}$ of the agents does not change between rounds, but the agents are free to change their bids.  We write $b_i^t$ for the bid of agent $i$ on round $t$.  We refer to $D = (b^1, \dotsc, b^T)$ as a \emph{declaration sequence}.  We will write $\pi(D)$ for the sequence of permutations generated by GSP on input sequence $D$.  The average social welfare generated by GSP is then $SW(\pi(D),\vals) = \frac{1}{T}\sum_t SW(\pi(\bids^t),\vals).$

Declaration sequence $D = (b^1, \dotsc, b^T)$ \emph{minimizes external regret} for agent $i$ if, for any fixed declaration $b_i$, $\sum_t u_i(b_i^t,b_{-i}^t) \geq \sum_t u_i(b_i,b_{-i}^t) + o(T)$.  That is, as $T$ grows large, the utility of agent $i$ approaches the utility of the optimal fixed strategy in hindsight.  The \emph{Price of Total Anarchy} is the worst-case ratio between social welfare in the optimum and the average social welfare obtained by GSP across all declaration sequences that minimize external regret for all agents.  That is, the price of total anarchy is
$$\lim_{T \to \infty} \max_{\vals, D} \frac{OPT(\vals)}{SW(\pi(D),\vals)}$$
where the maximum is taken over declaration sequences that minimize external regret for all agents.



\section{Bayesian Price of Anarchy}

In this section we prove the following upper bound on the Bayesian Price of Anarchy for GSP.

\begin{theorem}
\label{thm.bpoa}
The Bayesian Price of Anarchy of GSP is at most $2(1-1/e)^{-1} \approx 3.164$.
\end{theorem}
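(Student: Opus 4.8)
The plan is to use a deviation (smoothness) argument: for each bidder I exhibit an alternative bidding strategy, invoke the Bayes--Nash equilibrium inequality to bound the equilibrium utility below by the deviation utility, and then sum the resulting inequalities over all bidders. Throughout, fix the distributions $\dists$ and an equilibrium strategy profile $\bids(\cdot)$, and for a value profile $\vals$ let $s_i = \nu(\vals,i)$ denote the slot assigned to $i$ in the optimal assignment, so that $OPT(\vals) = \sum_i \alpha_{s_i}\vali$. The aim is to produce, for each bidder, a deviation whose utility captures a $(1-1/e)$ fraction of that bidder's optimal contribution $\alpha_{s_i}\vali$, net of a payment term that is controlled globally.

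The heart of the argument is a single-bidder deviation lemma. Fix a bidder $i$, a value $\vali$, the other bidders' declared bids $\bidsmi$, and a target slot $s$. The key observation is that if $i$ submits any bid $\bid_i' \le \vali$ and is thereby assigned some slot $k$, then by the GSP payment rule the per-click price is the next-highest bid, hence at most $\bid_i'$; therefore $\utili(\bid_i',\bidsmi) \ge \alpha_k(\vali - \bid_i')$, and moreover $\alpha_k \ge \alpha_s$ whenever $\bid_i'$ exceeds the $s$-th highest bid in $\bidsmi$. Rather than fixing one such bid, I would randomize $\bid_i'$ over the relevant range; the no-overbidding assumption guarantees each realized price is bounded by the realized bid, and a suitable choice of randomizing distribution yields an expected utility of at least $(1-1/e)\,\alpha_s\,\vali$ minus a term proportional to the competing bids that $i$ must overcome. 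This $(1-1/e)$ factor is what replaces the cruder constants obtained from deterministic deviations in previous work.

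To pass to the Bayesian setting, I would have bidder $i$ deviate against a resampled instance: knowing $\dists$ but not the realized $\valsmi$, it draws a phantom profile $\vals' \sim \dists$ independently of the truth, reads off the slot $s = \nu(\vals',i)$ it would optimally occupy there, and applies the randomized deviation above aimed at $s$. Since $\vals'$ and $\vals$ are identically distributed and independent, the pair is exchangeable, so in expectation the phantom contribution matches $\E[\alpha_{s_i}\vali]$, while the payment terms can be charged against the equilibrium bids in the real instance. Substituting this deviation into the equilibrium inequality, summing over $i$, and taking expectations over $\vals \sim \dists$ gives
\[
\E\Big[\sum_i \utili(\bids(\vals))\Big] \ge (1-1/e)\,\E\big[OPT(\vals)\big] - \E\big[\text{(aggregate payment term)}\big].
\]
Since no bidder overbids, every price is at most the corresponding winning bid, which is at most that winner's value, so after a charging argument matching each bidder to the competitor it displaces, the aggregate payment term is at most the expected equilibrium social welfare; combined with the elementary fact that total utility is itself at most social welfare, rearranging yields $\E[OPT] \le 2(1-1/e)^{-1}\,\E[SW]$, which is exactly the claimed bound (the factor $2$ being one copy of $SW$ from bounding total utility and one from bounding the payments).

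The main obstacle I anticipate is the single-bidder deviation lemma: choosing the randomizing distribution so that the captured value is provably at least $(1-1/e)\,\alpha_s\,\vali$ \emph{while} keeping the subtracted payment term small enough that, summed over all bidders, it charges into at most one copy of the equilibrium social welfare. Making these two requirements hold simultaneously, and verifying that the exchangeability argument aligns the phantom slot $s$ with the true optimal slot $s_i$ in expectation, is where the real work lies; the remaining steps are routine manipulation of the equilibrium inequality together with the no-overbidding bound on prices.
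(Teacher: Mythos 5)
Your plan is essentially the paper's proof: the single-bidder deviation lemma you identify as the crux is exactly the paper's Lemma \ref{lemma2}, which extracts the $(1-1/e)$ factor by applying the equilibrium inequality to the one-parameter family of deterministic underbids $b_i' = v_i - z$ and integrating the resulting tail bound over $z$ (the mirror image of your single randomized deviation), while your summing/charging step -- one copy of $SW$ from the bidders' received value and one from the no-overbidding bound on the displaced competitors' bids -- is exactly Lemma \ref{lemma1}. The only packaging difference is that the paper avoids your phantom-resampling step by proving the deviation inequality for \emph{every} target slot $k$ and only afterwards substituting $k = \nu(\vals,i)$, which handles the fact that bidder $i$ does not know its optimal slot under the realized $\valsmi$.
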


The proof of Theorem \ref{thm.bpoa} proceeds in two steps.  We first show that a structural property of bidding profiles implies a bound on the social welfare obtained by GSP (Lemma \ref{lemma1}).  We then show that this structural property holds at all BNE of the GSP (Lemma \ref{lemma2}).

\begin{lemma}\label{lemma1}
Suppose that $\vals \sim \dists$ and agents apply strategy profile $\bids(\cdot)$.  Suppose further that the following is true:
\begin{equation}
\label{eq.struct.prop}
\E_{\valsmi} [ \alpha_{\sigma(\bids(\vals),i)} v_i + \alpha_k \bid_{\pi^i(\bidsmi(\valsmi),k)}] \geq \gamma \alpha_k v_i \quad \mbox{\text{for all slots $k$, players $i$, and values $v_i$.}} 
\end{equation}
Then $\E_{\vals \sim \dists}[SW(\pi(\bids(\vals)),\vals)] \geq \frac{1}{2}\gamma\E_{\vals \sim \dists}[OPT(\vals)]$.
\end{lemma}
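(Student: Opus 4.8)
The plan is to derive the bound by summing the structural property~(\ref{eq.struct.prop}) over all agents, taking for each agent $i$ the slot $k$ that $i$ would receive in the optimal assignment. Summed this way, the first term $\alpha_{\sigma(\bids(\vals),i)}v_i$ aggregates to the expected social welfare $\E[SW]$, the right-hand side aggregates to $\gamma\,\E[OPT]$, and the remaining ``bid'' term $\alpha_k\,\bid_{\pi^i(\bidsmi(\valsmi),k)}$ is what must be controlled. If I can show this bid term is itself at most $\E[SW]$, then $\E[SW]+\E[SW]\ge \gamma\,\E[OPT]$, which is exactly the claimed factor-of-two bound. The factor of $2$ thus comes entirely from charging the bid term against a second copy of the welfare.

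To see the intuition in the full-information case, note that $i\mapsto\nu(\vals,i)$ is a permutation of the slots, so with $k=\nu(\vals,i)$ the chosen slots are distinct. Since deleting agent $i$ can only lower the bid occupying any fixed slot, and no agent overbids, $\sum_i \alpha_{\nu(\vals,i)}\bid_{\pi^i(\bidsmi,\nu(\vals,i))}\le \sum_k\alpha_k v_{\pi(\bids,k)}=SW$. The complication in the Bayesian setting is that the optimal slot $\nu(\vals,i)$ depends on the other agents' values $\valsmi$, whereas~(\ref{eq.struct.prop}) fixes $k$ before the expectation over $\valsmi$ is taken, so this slot is not a legal choice of $k$.

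I would resolve this by resampling. For each agent $i$, draw an independent copy $\valsmi'\sim\distsmi$ of the other values and take, for fixed $v_i$, the convex combination of~(\ref{eq.struct.prop}) over $k$ with weights $p_k(v_i)=\Pr_{\valsmi'}[\nu((v_i,\valsmi'),i)=k]$; since these sum to one, the combination is a valid consequence of~(\ref{eq.struct.prop}). Because $(v_i,\valsmi')$ has the same law as $\vals$, integrating over $v_i\sim\disti$ and summing over $i$ makes the right-hand side collapse to exactly $\gamma\,\E_{\vals}[OPT(\vals)]$, while the first term, unaffected by the weights, sums to $\E[SW]$.

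The main obstacle is bounding the resulting bid term, $\sum_i \E_{v_i}\sum_k p_k(v_i)\,\alpha_k\,\E_{\valsmi}[\bid_{\pi^i(\bidsmi(\valsmi),k)}]$, by $\E[SW]$ without losing a factor of $n$. The crux is that resampling recovers the permutation cancellation only on average: exactly one agent occupies each slot in any optimal assignment, so $\sum_i \Pr[\nu((v_i,\valsmi'),i)=k]=1$ for every slot $k$. I would combine this with two monotonicity facts — that deleting an agent only decreases the bid in each slot, giving $\E_{\valsmi}[\bid_{\pi^i(\bidsmi(\valsmi),k)}]\le \E_{\vals}[\bid_{\pi(\bids(\vals),k)}]$ with the right-hand side independent of $i$, and that no-overbidding gives $\bid_{\pi(\bids,k)}\le v_{\pi(\bids,k)}$. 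Together these let me pull the slot-$k$ factor out of the sum over $i$, apply $\sum_i\Pr[\,\cdot\,]=1$, and conclude the bid term is at most $\sum_k\alpha_k\,\E_\vals[v_{\pi(\bids,k)}]=\E[SW]$. The delicate point is that the $i$-free deletion bound must be used \emph{before} reintroducing $v_i$: replacing the bid by a value in the full profile $(v_i,\valsmi)$ would re-couple the weight $p_k(v_i)$ with the welfare term and reinstate the factor-$n$ loss.
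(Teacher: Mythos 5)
Your proposal is correct and follows essentially the same route as the paper's proof: substitute the optimal slot for $k$ (decoupled from the expectation over $\valsmi$ via an independent copy), identify the first term with $\E[SW]$ and the right-hand side with $\gamma\,\E[OPT]$, and bound the bid term by a second copy of $\E[SW]$ using deletion monotonicity, the fact that the optimal assignment occupies each slot exactly once, and no-overbidding. Your explicit resampling/convex-combination device is just a more careful rendering of the paper's terser observation that $\valsmi$ does not appear in the structural inequality, so that $k=\nu(\vals,i)$ is a legal substitution.
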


\begin{lemma}\label{lemma2}
At any BNE of GSP, \eqref{eq.struct.prop} holds with $\gamma = 1 - \frac{1}{e}$.
\end{lemma}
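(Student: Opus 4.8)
The plan is to fix a BNE, a player $i$, a value $v_i$, and a slot $k$, and to establish \eqref{eq.struct.prop} by a single deviation argument. Write $\beta_k = \bid_{\pi^i(\bidsmi(\valsmi),k)}$ for the (random, as a function of $\valsmi$) bid occupying slot $k$ when $i$ is absent, and let $F(b) = \Pr_{\valsmi}[\beta_k \le b]$ be its CDF. The first term in \eqref{eq.struct.prop} is the value component of $i$'s equilibrium utility: since GSP payments are nonnegative, $\utili(\bidi(\vali),\bidsmi(\valsmi)) \le \alpha_{\sigma(\bids(\vals),i)}v_i$, so $\E_{\valsmi}[\alpha_{\sigma(\bids(\vals),i)}v_i] \ge \E_{\valsmi}[\utili(\bidi(\vali),\bidsmi(\valsmi))]$. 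By the BNE condition this last quantity is at least the expected utility of any alternative (possibly randomized, non-overbidding) strategy. Thus it suffices to exhibit a deviation whose expected utility, plus the term $\alpha_k\E_{\valsmi}[\beta_k]$, is at least $(1-\tfrac1e)\alpha_k v_i$.

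Next I would lower-bound the utility of a fixed deviation bid $b \le v_i$. The key observation is that whenever $b > \beta_k$, player $i$ is ranked ahead of the player who held slot $k$ in his absence, so $i$ captures some slot $s \le k$ (hence $\alpha_s \ge \alpha_k$) and pays at most $b$ per click (the bid of whoever sits immediately below him). Since $v_i - b \ge 0$, this yields $\utili(b,\bidsmi(\valsmi)) \ge \alpha_k(v_i - b)$, while in all cases $\utili \ge 0$. Taking expectations over $\valsmi$ gives $\E_{\valsmi}[\utili(b,\bidsmi(\valsmi))] \ge \alpha_k (v_i - b)F(b)$. (Ties at $b = \beta_k$ are a non-issue: one bids $b+\e$ and lets $\e \to 0$, or appeals to the atomlessness of the deviation distribution below.)

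The creative step — and the source of the constant $1-\tfrac1e$ — is the choice of a \emph{randomized} deviation. I would draw the deviation bid $b$ from the density $g(b) = \tfrac{1}{v_i - b}$ on the interval $[0,c]$ with $c = v_i(1-\tfrac1e)$; a direct integration confirms $\int_0^c g(b)\,db = \ln\tfrac{v_i}{v_i - c} = 1$, so this is a legitimate non-overbidding strategy. Averaging the bound of the previous paragraph over $g$, the factor $v_i - b$ cancels against $g$, leaving
\[
\E_{b \sim g}\E_{\valsmi}[\utili(b,\bidsmi(\valsmi))] \ge \alpha_k \int_0^c (v_i - b)F(b)\,g(b)\,db = \alpha_k\int_0^c F(b)\,db .
\]
Combining this with $\E_{\valsmi}[\beta_k] \ge \int_0^{v_i}(1 - F(b))\,db$ (the mean of the nonnegative variable $\beta_k$, truncated at $v_i$) and the two reductions above yields
\[
\E_{\valsmi}[\alpha_{\sigma(\bids(\vals),i)}v_i] + \alpha_k\E_{\valsmi}[\beta_k]
\ge \alpha_k\Big(\int_0^c F(b)\,db + \int_0^{v_i}(1 - F(b))\,db\Big)
= \alpha_k\Big(v_i - \int_c^{v_i} F(b)\,db\Big).
\]
Since $F \le 1$, the last integral is at most $v_i - c = v_i/e$, so the right-hand side is at least $\alpha_k c = (1-\tfrac1e)\alpha_k v_i$, which is exactly \eqref{eq.struct.prop}.

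I expect the only real obstacle to be \emph{finding} the deviation density $g(b) = 1/(v_i - b)$ truncated at $v_i(1-\tfrac1e)$: this is precisely the distribution that makes $(v_i-b)g(b)$ constant, producing the clean cancellation above and forcing the worst-case CDF $F$ to give up exactly a $1/e$ fraction. Everything after that choice is a short computation. A secondary point to verify carefully is the slot-displacement and payment bound $\utili(b,\cdot) \ge \alpha_k(v_i - b)$ on the event $b > \beta_k$, which relies on the no-overbidding assumption ($b \le v_i$) and on GSP assigning better slots to higher bidders.
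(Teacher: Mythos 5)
Your proof is correct, but it takes a genuinely different route from the paper's. Both arguments rest on the same two ingredients --- the equilibrium utility is bounded above by $\E_{\valsmi}[\alpha_{\sigma(\bids(\vals),i)}v_i]$ and below by the utility of any non-overbidding deviation, and a deviation bid $b<v_i$ that beats $\beta_k=\bid_{\pi^i(\bidsmi(\valsmi),k)}$ earns at least $\alpha_k(v_i-b)$ --- but they aggregate over deviations differently. The paper applies the BNE condition to \emph{every} deterministic deviation $b_i'=v_i-z$ to extract a pointwise tail bound $\prob_{\valsmi}[v_i-\beta_k>z]\le \E[\alpha_\sigma v_i]/(z\alpha_k)$, integrates that tail with a split at $\E[\alpha_\sigma v_i]/\alpha_k$ to obtain a bound containing a logarithm, and finishes by optimizing $\log(x)/x\le 1/e$; it also needs a preliminary case distinction on whether $\alpha_k<\E[\alpha_{\sigma}]$. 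You instead apply the BNE condition once, to a single \emph{randomized} deviation with density $g(b)=1/(v_i-b)$ on $[0,v_i(1-1/e)]$; the factor $v_i-b$ cancels against $g$, and the constant $1-1/e$ falls out of $\int_c^{v_i}F\le v_i-c$ with no logarithm and no case analysis. The two computations are equivalent in strength (both are tight at $1-1/e$), but yours is the now-standard ``smoothness''-style formulation: the deviation depends only on $v_i$ and not on the equilibrium or the opponents' distribution, which makes the extensions to coarse correlated equilibria and to the Byzantine setting in the later sections completely mechanical. The only points that need care are the ones you already flagged: ties at $b=\beta_k$ (harmless since $g$ is atomless, so the set where $\prob[\beta_k<b]\ne F(b)$ has $g$-measure zero) and the verification that beating $\beta_k$ guarantees a slot $s\le k$ at price at most $b$.
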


Lemma \ref{lemma1} and Lemma \ref{lemma2} immediately imply Theorem \ref{thm.bpoa}.

\begin{proofof}{Lemma \ref{lemma1}}
Fix some value profile $\vals$.  For notational convenience, let $\Gamma$ be the induced distribution on bid profiles $\bids = \bids(\vals)$ when $\vals \sim \dists$.  Then for any player $i$, value $\vali$, and slot $k$, if we write $\bidi = \bidi(\vali)$, then we can express \eqref{eq.struct.prop} as:
\[
\E_{\bidsmi \sim \Gamma_{-i}} [\alpha_{\sigma(\bids,i)} \vali] + \E_{\bidsmi \sim \Gamma_{-i}} [ \alpha_k \bid_{\pi^i(\bidsmi,k)}] \geq \gamma \alpha_k \vali. 
\]
Note that $\valsmi$ does not appear in this expression; bids $\bidsmi$ are taken to be drawn from induced distribution $\Gamma_{-i}$.  Now, recalling that $\nu(\vals,i)$ is the slot assigned to player $i$ in the optimal assignment for values $\vals$, we can take $k = \nu(\vals,i)$ in the above inequality.  We then have
\[
\E_{\bidsmi \sim \Gamma_{-i}} [\alpha_{\sigma(\bids,i)} \vali] + \E_{\bidsmi \sim \Gamma_{-i}} [ \alpha_{\nu(\vals,i)} \bid_{\pi^i(\bidsmi,\nu(\vals,i))}] \geq \gamma \alpha_{\nu(\vals,i)} \vali
\]
for all $\vals$ and all $i$.  Notice that the strategy $\bidi(\cdot)$ does not appear in the second term, so we can rewrite as
\[
\E_{\bidsmi \sim \Gamma_{-i}} [\alpha_{\sigma(\bids,i)} \vali] + \E_{\bids \sim \Gamma} [ \alpha_{\nu(\vals,i)} \bid_{\pi^i(\bidsmi,\nu(\vals,i))}] \geq \gamma \alpha_{\nu(\vals,i)} \vali.
\]
Summing over all $i$ and taking expectation over $\vals$:
$$ \E_{\vals \sim \dists}  \left[ \sum_i \E_{\bidsmi \sim \Gamma_{-i}}[ \alpha_{\sigma(\bids,i)} \vali] \right] +  \E_{\vals \sim \dists} \left[ \sum_i \E_{\bids \sim \Gamma}[\alpha_{\nu(\vals,i)}  b_{\pi^i(\bids,\nu(\vals,i))}] \right] \geq \gamma \E_{\vals \sim \dists} \left[ \sum_i \alpha_{\nu(\vals,i)} \vali \right] $$
Consider each of the three expectations in the above expression.  For the third term, we note 
$$\E_{\vals \sim \dists}\left[ \sum_i \alpha_{\nu(\vals,i)} \vali \right] = \E_{\vals \sim \dists}[OPT(\vals)].$$ 
For the first term, linearity of expectation implies
$$\E_{\vals \sim \dists}\left[ \sum_i \E_{\bidsmi \sim \Gamma_{-i}}[  \alpha_{\sigma(\bids,i)}\vali ]\right]
= \E_{\vals \sim \dists}\left[\sum_i \alpha_{\sigma(\bids(\vals),i)}\vali \right]
= \E_{\vals \sim \dists}[SW(\pi(\bids(\vals)),\vals)].$$
For the second expectation, notice that:
$$ \E_{\vals \sim \dists, \bids \sim \Gamma} \left[ \sum_i \alpha_{\nu(\vals,i)}  b_{\pi^i(\bidsmi,\nu(\vals,i))} \right] \leq \E_{\vals \sim \dists} \E_{\bids \sim \Gamma} \left[ \sum_i \alpha_{\nu(\vals,i)}  b_{\pi(\bids,\nu(\vals,i))} \right] =  \E_{\bids \sim \Gamma} \left[ \sum_k \alpha_{k}  b_{\pi(b,k)} \right] $$
which is $\E_{\vals \sim \dists}[SW(\pi(\vals),\vals)]$.

We therefore conclude $2\E_{\vals \sim \dists}[SW(\pi(\vals),\vals)] \geq \gamma \E_{\vals \sim \dists}[OPT(\vals)]$, completing the proof.
\end{proofof}

\begin{proofof}{Lemma \ref{lemma2}}
We wish to show that $\E_{\valsmi} [ \alpha_{\sigma(\bids(\vals),i)} v_i + \alpha_k \bid_{\pi^i(\bidsmi(\valsmi),k)}] \geq \gamma \alpha_k v_i$, for all slots $k$, players $i$, and values $v_i$.  First, note that if $\alpha_k < \E_{\valsmi}[\alpha_{\sigma(\bids(\vals),i)}]$ then the result is trivial. So, let's consider $\alpha_k \geq \E_{\valsmi}[\alpha_{\sigma(\bids(\vals),i)}]$.  We'll prove that 
\[
\E_{\valsmi} [ \alpha_{\sigma(\bids(\vals),i)} v_i + \alpha_k \bid_{\pi^i(\bidsmi(\valsmi),k)}] \geq \alpha_k v_i - \E_{\valsmi} [ \alpha_{\sigma(\bids(\vals),i)} v_i]  \cdot \log \frac{ \alpha_k v_i }{\E_{\valsmi} [ \alpha_{\sigma(\bids(\vals),i)} v_i] }
\]
and then dividing everything by $\alpha_k v_i$ and using that $\frac{\log(x)}{x} \leq \frac{1}{e}$ we get the desired result.

Consider any bid $b'_i$ for agent $i$ such that $b_i < b'_i < v_i$. Then for each slot $k$, since bid $\bidi$ is utility-maximizing for agent $i$, the utility of bidding $b'_i$ satisfies
\[
\E_{\valsmi}[\alpha_{\sigma(\bids(\vals),i)} v_i] \geq \E_{\valsmi}[u_i(b_i',\bidsmi)].
\]
Also, if agent $i$ bids $b_i'$ and moreover it is true that $b_{\pi^i(\bidsmi,k)} < b_i'$, then agent $i$ will win a slot with at least $\alpha_k$ clicks.  Thus,
\[
\E_{\valsmi}[u_i(b_i',\bidsmi)] \geq (v_i - b_i')\alpha_k \prob_{\valsmi}[b_{\pi^i(\bidsmi(\valsmi),k)} < b_i'].
\]
Combining these two inequalities and substituting $z = v_i - b'_i$, we get
\[
\prob_{\valsmi}[\vali - b_{\pi^i(\bidsmi(\valsmi),k)} > z] \leq \frac{1}{z \cdot \alpha_k} \E_{\valsmi} [\alpha_{\sigma(\bids(\vals),i)} v_i].
\]
We are now able to estimate the expected value of $v_i - b_{\pi^i(\bidsmi,k)}$ using the fact that 
\[ \E_{\valsmi}[v_i - b_{\pi^i(\bidsmi(\valsmi),k)}] \geq \int_0^{\infty} \prob_{\valsmi}[v_i - b_{\pi^i(\bidsmi(\valsmi),k)} > z]dz.
\]  
Since $\vali - b_{\pi^i(\bidsmi(\valsmi),k)} \leq \vali$ with probability $1$, we have
$$\begin{aligned}
\E[v_i - b_{\pi^i(\bidsmi,k)}] 
& = \int_{0}^{v_i}\prob_{\valsmi}[\vali - b_{\pi^i(\bidsmi(\valsmi),k)} > z] dz \\
& \leq \int_{0}^{\E_{v_{-i}} [\alpha_{\sigma(\bids(\vals),i)} v_i] / \alpha_k } 1 dz + \int_{\E_{v_{-i}} [\alpha_{\sigma(\bids(\vals),i)} v_i] / \alpha_k }^{v_i }\frac{ \E_{v_{-i}} [\alpha_{\sigma(\bids(\vals),i)} v_i]}{\alpha_k z} dz \\
& \leq \frac{ \E_{v_{-i}} [\alpha_{\sigma(\bids(\vals),i)} v_i]} {\alpha_k} + \frac{\E_{v_{-i}} [\alpha_{\sigma(\bids(\vals),i)} v_i]}{\alpha_k}\left(\log \vali - \log \frac{\E_{v_{-i}} [\alpha_{\sigma(\bids(\vals),i)} v_i] }{ \alpha_k}  \right).
\end{aligned}$$
Multiplying both sides by $\alpha_k$ and rearranging gives the required inequality.
\end{proofof}

\subsection{Correlated bids and Price of Total Anarchy}
\label{sec.pota}

Notice that the proof of the previous section applies even in cases where agent bids are coarsely correlated. In such a case, we can consider a common source of randomness $\mathcal{R}$ and each bidding function to be a function $b_i(v_i, r)$, where $r \sim \mathcal{R}$. We call a profile of bidding functions a coarse correlated equilibrium if:
\[
\E_{\valsmi \sim \distsmi}[\utili(\bidi(\vali,r), \bidsmi(\valsmi,r))]
\geq 
\E_{\valsmi \sim \distsmi}[\utili(\bid'_i(\vali,r), \bidsmi(\valsmi,r))], \forall i, v_i, r
\]

We still suppose $v_i \sim F_i$ where $F_i$ are independent distributions. In this case, $F$ and $\mathcal{R}$ induce a distribution $\Gamma$ on the bids.

Adapting Lemma \ref{lemma2} to this context is straightforward. Now, to adapt Lemma \ref{lemma1}, observe that the only additional requirement is that we must argue that

$$ \E_{\bidsmi \sim \Gamma_{-i}} [ \alpha_{\sigma(v,i)} b_{\pi^i(\bidsmi,\sigma(v,i))}] =  \E_{\bids \sim \Gamma} [ \alpha_{\sigma(v,i)} b_{\pi^i(\bidsmi,\sigma(v,i))}].  $$

However, this follows because the marginal of $b \sim \Gamma$ restricted to $-i$ is exactly $b_{-i} \sim \Gamma_{-i}$.

We now note that the result from the previous section implies a bound on the price of total anarchy for GSP.  This follows because, whenever bidding sequence $D = (\bids^1, \dotsc, \bids^T)$ minimizes regret for all agents, the bidding strategy with shared randomness $\bidi(\vali,t) = \bidi^t(\vali)$ for $t \in [T]$ is a coarse correlated equilibrium.  Lemmas \ref{lemma1} and \ref{lemma2} therefore imply that, for all $\vals$, 
$$\E_{t \in [T]}[SW(\pi(\bids^t(\vals)),\vals)] \geq \frac{1}{2}(1-1/e) OPT(\vals)$$
which implies that the price of total anarchy is bounded by $2(1-1/e)^{-1}$.

\subsection{Byzantine Agents}

We now consider a setting in which, in addition to the $n$ advertisers who bid rationally, there are $m$ ``byzantine'' advertisers who may bid irrationally.  Write $N$ for the set of rational advertisers, and $M$ for the set of irrational advertisers.  Note that we still think of the irrational advertisers as being true players, who stil receive value per click.  The irrational bidders simply cannot be assumed to play at equilibrium; for example, they may not have experience with the GSP auction, or not know about historical bidding patterns.  

Given an outcome $\pi$ (which is an assignment of these $n+m$ bidders to $n+m$ slots), the definition of social welfare is unchanged: it is $SW(\pi,\vals) = \sum_{i \in N \cup M} \vali \alpha_{\pi(i)}$.  We define the social welfare of bidders in $N$ to be precisely that: $SW_N(\pi,\vals) = \sum_{i \in N} \vali \alpha_{\pi(i)}$.  The optimal social welfare for bidders in $N$ is $OPT_N(\vals) = \max_{\pi} SW_N(\pi,\vals)$.

We wish to show that the total social welfare obtained by GSP is a good approximation to $OPT_N(\vals)$ when the players in $N$ play at equilibrium and the players in $M$ play arbitrarily.  That is, the addition of irrational players does not degrade the social welfare guarantees of GSP had they not participated.  In order to make this claim, we must impose a restriction on the behaviour of the irrational players: that they do not overbid.  In other words, we require that $\bidi(\vali) \leq \vali$ for all $i \in M$ and all $\vali$.  We feel this is a natural restriction: overbidding is easily seen to be a dominated strategy (i.e.\ any strategy that bids higher than $\vali$ is dominated by a strategy that lowers such bids to be at most $\vali$).  Moreover, it is arguable that inexperienced bidders would bid conservatively, and not risk a large payment with no gain.

Given that byzantine agents do not overbid, we note that our BPoA bounds go through in this setting almost without change.  In particular, our structural property \eqref{eq.struct.prop} continues to hold for all agents in $N$.

\begin{lemma}\label{lemma2-byz}
Equation \eqref{eq.struct.prop} holds with $\gamma = 1 - \frac{1}{e}$ for all for all $i \in N$.
\end{lemma}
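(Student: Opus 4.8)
The plan is to reuse the proof of Lemma \ref{lemma2} almost verbatim, checking at each step that the presence of the byzantine bidders in $M$ does nothing to break the argument for a \emph{rational} agent $i \in N$. The key observation is that the derivation of \eqref{eq.struct.prop} for agent $i$ never used anything about the rationality of the \emph{other} agents: it relied only on (a) the fact that $i$ itself plays a utility-maximizing strategy at the BNE, and (b) a purely mechanical property of GSP, namely that if $i$ deviates to a bid $b_i'$ exceeding the bid occupying slot $k$ in $i$'s absence, then $i$ is awarded a slot with at least $\alpha_k$ clicks. Neither fact cares whether the bidders below $i$ are rational or byzantine, so long as their bids are well-defined random variables.

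Concretely, I would fix $i \in N$, a slot $k$, and a value $v_i$, and redefine all the quantities ($\sigma$, $\pi^i$, $\Gamma_{-i}$) over the full population of $n+m$ bidders, where the marginal distribution on $b_{-i}$ is now induced jointly by $\distsmi$ over the rational agents' values and by the (arbitrary but fixed) strategies of the agents in $M$. As before, assume $\alpha_k \geq \E_{\valsmi}[\alpha_{\sigma(\bids(\vals),i)}]$, since the complementary case is trivial. For any deviation $b_i$ with $b_i < b_i' < v_i$, the equilibrium condition for $i$ gives $\E_{\valsmi}[\alpha_{\sigma(\bids(\vals),i)} v_i] \geq \E_{\valsmi}[u_i(b_i',\bidsmi)]$, and the GSP mechanics give $\E_{\valsmi}[u_i(b_i',\bidsmi)] \geq (v_i - b_i')\alpha_k \prob_{\valsmi}[b_{\pi^i(\bidsmi,k)} < b_i']$. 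These are exactly the two inequalities driving Lemma \ref{lemma2}, and the remaining steps---deriving the tail bound on $v_i - b_{\pi^i(\bidsmi,k)}$, integrating it, and applying $\frac{\log x}{x} \leq \frac 1e$---are unchanged.

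The one place where I must be careful, and which I expect to be the crux of the argument, is the GSP-mechanics inequality. It asserts that when $i$ raises its bid to $b_i'$ and $b_i'$ exceeds the bid sitting in slot $k$ of the without-$i$ assignment, then $i$ secures a slot at least as good as $k$. This rests on the monotonicity of GSP's sorted allocation: $i$'s bid of $b_i'$ lands it strictly above the agent who would have occupied slot $k$, and GSP assigns slots in decreasing order of bids. This reasoning is about the \emph{ordering} of bids, not their provenance, so it holds identically whether the agents ranked below $b_i'$ are rational or byzantine. The only genuine requirement is that the byzantine bids $b_{\pi^i(\bidsmi,k)}$ be fixed quantities that $i$'s deviation does not perturb---which holds because a unilateral deviation by $i$ cannot change the strategies of the other $n+m-1$ players---and that they form a proper distribution over which the tail-probability estimate is meaningful.

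I would close by noting explicitly why the no-overbidding assumption on $M$ is needed here (even though Lemma \ref{lemma2-byz} itself does not invoke it): the byzantine bids enter Lemma \ref{lemma1} through the term $\E_{\bids \sim \Gamma}[\sum_k \alpha_k b_{\pi(\bids,k)}]$, which is bounded above by the realized social welfare $\sum_k \alpha_k v_{\pi(\bids,k)}$ precisely because every bidder, rational or byzantine, satisfies $b_j \leq v_j$. Thus the proof of Lemma \ref{lemma2-byz} is a direct transcription of Lemma \ref{lemma2}, with the distribution $\Gamma_{-i}$ reinterpreted over the enlarged bidder set, and the no-overbidding hypothesis reserved for the subsequent welfare comparison rather than for the structural inequality proved here.
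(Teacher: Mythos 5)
Your proposal is correct and takes essentially the same approach as the paper, which likewise proves Lemma \ref{lemma2-byz} by observing that the argument for Lemma \ref{lemma2} uses only the best-response property of agent $i$ and the sorting mechanics of GSP, neither of which depends on the rationality of the other bidders. Your closing observation that the no-overbidding assumption on $M$ is consumed by the welfare comparison in Lemma \ref{lemma1-byz}, not by the structural inequality here, is also consistent with the paper.
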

\begin{proof}
This proof follows the proof of Lemma \ref{lemma2} without change.  Note that in that proof we used only the fact that the bidding strategy of agent $i$ is a best response, so the fact that other agents may not bid at equilibrium does not affect the argument.
\end{proof}

The corresponding version of Lemma \ref{lemma1} then follows from \eqref{eq.struct.prop} just as in the setting without byzantine agents.

\begin{lemma}\label{lemma1-byz}
If \eqref{eq.struct.prop} holds for all $i \in N$, then $\E_{\vals \sim \dists}[SW(\pi(\bids(\vals)),\vals)] \geq \frac{1}{2}\gamma\E_{\vals \sim \dists}[OPT_N(\vals)]$.
\end{lemma}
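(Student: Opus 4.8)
The plan is to follow the proof of Lemma~\ref{lemma1} almost verbatim, summing the structural property \eqref{eq.struct.prop} only over the rational players $i \in N$ (which is all we are given) and contracting against an optimal-for-$N$ assignment rather than the global optimum. Let $\nu_N(\vals,\cdot)$ denote an assignment of the $n+m$ bidders to the $n+m$ slots that maximizes $SW_N$, so that $\sum_{i\in N}\alpha_{\nu_N(\vals,i)}\vali = OPT_N(\vals)$; note $\{\nu_N(\vals,i) : i\in N\}$ is a set of $n$ distinct slots (obtained by placing the rational bidders in the top $n$ slots in decreasing order of value). First I would instantiate \eqref{eq.struct.prop} at $k=\nu_N(\vals,i)$ for each $i\in N$, rewrite it exactly as in Lemma~\ref{lemma1} so that the strategy $\bidi(\cdot)$ drops out of the second term, then sum over $i\in N$ and take expectation over $\vals\sim\dists$.

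Then I would handle the three resulting expectations. The third (right-hand) term is $\gamma\,\E_{\vals}[\sum_{i\in N}\alpha_{\nu_N(\vals,i)}\vali]=\gamma\,\E_{\vals}[OPT_N(\vals)]$ by the choice of $\nu_N$. The first term equals $\E_{\vals}[\sum_{i\in N}\alpha_{\sigma(\bids(\vals),i)}\vali]=\E_{\vals}[SW_N(\pi(\bids(\vals)),\vals)]$, and since the byzantine bidders contribute nonnegative value this is at most $\E_{\vals}[SW(\pi(\bids(\vals)),\vals)]$ — where the original argument had exact equality, here we simply relax to an inequality in the favourable direction. For the second term I would reuse the monotonicity observation $b_{\pi^i(\bidsmi,k)}\le b_{\pi(\bids,k)}$ (deleting a bidder can only raise the bid occupying any fixed slot), giving an upper bound of $\E_{\vals,\bids}[\sum_{i\in N}\alpha_{\nu_N(\vals,i)} b_{\pi(\bids,\nu_N(\vals,i))}]$.

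The one genuinely new wrinkle is in finishing the second term: because we now sum over $i \in N$ only, the slots $\{\nu_N(\vals,i):i\in N\}$ form a size-$n$ subset of all $n+m$ slots rather than ranging over every slot, so the clean identity $\sum_i=\sum_k$ from Lemma~\ref{lemma1} is replaced by the inequality $\sum_{i\in N}\alpha_{\nu_N(\vals,i)}b_{\pi(\bids,\nu_N(\vals,i))}\le \sum_{k=1}^{n+m}\alpha_k b_{\pi(\bids,k)}$, valid because every summand $\alpha_k b_{\pi(\bids,k)}$ is nonnegative. The decisive ingredient is then the no-overbidding hypothesis applied to \emph{every} bidder, rational and byzantine alike: since $b_j\le v_j$ for all $j\in N\cup M$, we get $\sum_k \alpha_k b_{\pi(\bids,k)} \le \sum_k \alpha_k v_{\pi(\bids,k)} = SW(\pi(\bids),\vals)$, so the second term is at most $\E_{\vals}[SW(\pi(\bids(\vals)),\vals)]$. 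This is precisely where the restriction on byzantine play is consumed, and the step I would expect to require the most care to state cleanly. Combining, both the first and second terms are bounded by $\E_{\vals}[SW]$, whence $2\,\E_{\vals}[SW(\pi(\bids(\vals)),\vals)]\ge \gamma\,\E_{\vals}[OPT_N(\vals)]$, which is the claim.
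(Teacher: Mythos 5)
Your proposal is correct and follows essentially the same route as the paper: instantiate \eqref{eq.struct.prop} at the slot assigned to $i$ in the $SW_N$-optimal assignment, bound the first term by $\E[SW_N]\le\E[SW]$, and bound the second term by passing from the deleted-bidder permutation to $\pi(\bids,\cdot)$, extending the partial sum over slots to all slots, and invoking no-overbidding (for byzantine agents as well) to replace bids by values. Your write-up is in fact slightly more careful than the paper's, which conflates $\nu$ with $\nu_N$ and writes $OPT(\vals)$ where $OPT_N(\vals)$ is meant.
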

\begin{proof}
Precisely as in the proof of Lemma \ref{lemma1}, we obtain
\[
\E_{\vals \sim \dists}  \left[ \sum_{i \in N} \E_{\bidsmi \sim \Gamma_{-i}}[ \alpha_{\sigma(\bids,i)} \vali] \right] +  \E_{\vals \sim \dists} \left[ \sum_{i \in N} \E_{\bids \sim \Gamma}[\alpha_{\nu(\vals,i)}  b_{\pi^i(\bids,\nu(\vals,i))}] \right] \geq \gamma \E_{\vals \sim \dists} \left[ \sum_{i \in N} \alpha_{\nu(\vals,i)} \vali \right]
\]
where we note that the summations are over agents in $N$. Then, as in Lemma \ref{lemma1},
$$\E_{\vals \sim \dists}\left[ \sum_{i \in N} \alpha_{\nu(\vals,i)} \vali \right] = \E_{\vals \sim \dists}[OPT(\vals)].$$ 
and
$$\E_{\vals \sim \dists}\left[ \sum_{i \in N} \E_{\bidsmi \sim \Gamma_{-i}}[  \alpha_{\sigma(\bids,i)}\vali ]\right]
= \E_{\vals \sim \dists}[SW_N(\pi(\bids(\vals)),\vals)] \leq \E_{\vals \sim \dists}[SW(\pi(\bids(\vals)),\vals)].$$
For the second expectation, notice that:
\[
\begin{aligned}
& \E_{\vals \sim \dists, \bids \sim \Gamma} \left[ \sum_{i \in N} \alpha_{\nu(\vals,i)}  b_{\pi^i(\bidsmi,\nu(\vals,i))} \right] 
\leq \E_{\vals \sim \dists} \E_{\bids \sim \Gamma} \left[ \sum_{i \in N} \alpha_{\nu(\vals,i)}  b_{\pi(\bids,\nu(\vals,i))} \right] \\
& = \E_{\bids \sim \Gamma} \left[ \sum_{k : \exists i \in N, \nu(\vals,i)=k}  \alpha_{k}  b_{\pi(\bids,k)} \right] 
 \leq \E_{\bids \sim \Gamma} \left[ \sum_k \alpha_{k}  \val_{\pi(\bids,k)} \right]  
= \E_{\vals \sim \dists} [SW(\pi(\vals),\vals)]. 
\end{aligned}
\]

We therefore conclude $2\E_{\vals \sim \dists}[SW(\pi(\vals),\vals)] \geq \gamma \E_{\vals \sim \dists}[OPT_N(\vals)]$, completing the proof.
\end{proof}




Together, Lemma \ref{lemma2-byz} and Lemma \ref{lemma1-byz} imply that the Bayesian Price of Anarchy of GSP is at most $2(1-1/e)^{-1}$ even in the presence of irrational bidders.  Following the comments in Section \ref{sec.pota}, we can apply the same argument to obtain a matching bound on the Price of Total Anarchy with irrational bidders.

\section{Towards a Tight Pure PoA}

In \cite{PLT10}, Paes Leme and Tardos give a bound of 1.618 for the
Pure Price of Anachy of GSP. They also prove that for $n = 2$ slots,
the correct bound is $1.25$. Here we show that for $n = 3$, the
correct bound is $1.25913$. We conjecture that this is the correct
Price of Anarchy for GSP (for any number of slots) and we suggest an
approach to prove this result.

\begin{lemma}\label{3-slot-lemma}
       For $n = 3$ slots, the pure Price of Anarchy of GSP is exactly $1.25913$.
\end{lemma}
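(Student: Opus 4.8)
The plan is to prove matching upper and lower bounds. The lower bound (tightness) is an explicit construction: exhibit click-through-rates $\alpha_1 \ge \alpha_2 \ge \alpha_3$, three values, and a bid profile that is a pure NE of GSP whose allocation achieves $OPT/SW = 1.25913$. The upper bound --- that no pure NE can do worse --- is the substantive part, and I would attack it by reducing to a finite family of low-dimensional optimization problems.

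For the upper bound, first normalize. Since GSP realizes the optimal assignment for its own bid profile, relabel players by the slot they occupy under GSP, writing $g_j$ for the value of the player in slot $j$ and $b_1 \ge b_2 \ge b_3 \ge 0$ for the (automatically sorted) bids, so that the price of slot $j$ is $b_{j+1}$ (with $b_4 = 0$). Then $OPT$ is obtained by re-sorting the multiset $\{g_1,g_2,g_3\}$ into decreasing order against $\alpha_1 \ge \alpha_2 \ge \alpha_3$ (rearrangement inequality), while $SW = \sum_j \alpha_j g_j$. The governing constraints are: no-overbid, $b_j \le g_j$; the upward-deviation conditions $\alpha_k(g_j - b_k) \le \alpha_j(g_j - b_{j+1})$ for $k<j$; and the downward-deviation conditions $\alpha_k(g_j - b_{k+1}) \le \alpha_j(g_j - b_{j+1})$ for $k>j$. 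The objective is to maximize $\frac{\sum_j \alpha_j g_{(j)}}{\sum_j \alpha_j g_j}$, where $g_{(j)}$ denotes the $j$-th largest of the $g$'s.

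The key simplification is that this ratio exceeds $1$ only when the $g$'s are out of sorted order relative to the slots, so I would enumerate the few permutations describing how $\{g_1,g_2,g_3\}$ sits relative to decreasing order. Fixing a permutation class pins down which $g_i$ is largest, second, and third, so the numerator $OPT$ becomes a fixed linear form; simultaneously, once the CTR ratios $x = \alpha_2/\alpha_1$ and $y = \alpha_3/\alpha_1$ (with $1 \ge x \ge y \ge 0$) are fixed, the no-overbid and NE inequalities are all linear in the $g$'s and $b$'s. Hence for fixed $(x,y)$ each class is a linear-fractional program, solvable in closed form; maximizing the resulting expression over the two-dimensional region $1 \ge x \ge y \ge 0$ yields the worst case within that class, and taking the maximum across all classes gives the pure PoA.

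The main obstacle will be this final optimization: identifying which permutation class and which subset of NE constraints are binding at the optimum, and then solving the two-variable problem exactly. I expect the extremal configuration to make several NE inequalities tight at once, collapsing the problem to optimizing a single-variable rational (or low-degree algebraic) function whose critical point is an algebraic number equal to $1.25913$; carrying out this optimization and certifying it dominates every other case is the delicate step. Once the extremal $(x,y)$ and the associated tight bids are recovered, they double as the tightness construction, so the lower bound follows by verifying directly that the recovered profile is a genuine NE attaining the claimed ratio.
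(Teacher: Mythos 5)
Your strategy---classify pure equilibria by the permutation relating the GSP allocation to the optimal one, write down the GSP deviation inequalities together with no-overbidding, and reduce to an optimization over the click-through-rate ratios---is exactly the paper's strategy. But your write-up defers precisely the steps that constitute the actual proof, and one organizing idea is missing. The paper first observes that if the equilibrium permutation $\pi$ has a fixed point ($\pi(i)=i$ for some $i$), the PoA is already at most $1.25$, which collapses the enumeration to the two derangements $[2,3,1]$ and $[3,1,2]$; without that observation you are committed to grinding through all five non-identity classes, which is feasible but considerably messier. Second, the paper's ``linear-fractional program'' is solved not abstractly but by one specific, cheap substitution: for the player holding the bottom slot (price $0$), the deviation to slot $k$ combined with no-overbidding gives $v_j \geq \frac{\alpha_k-\alpha_3}{\alpha_k}v_1$ (and symmetrically in the other case), and the elementary monotonicity fact that $\frac{x+av}{y+bv}$ decreases in $v$ when $a\le b$ lets one plug these extremal values in directly. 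That turns the whole problem into a single ratio in $(\alpha_1,\alpha_2,\alpha_3)$, equation~\eqref{3-slot-eqn}, whose maximum $1.25913$ (a root of a quartic) is asserted by ``standard techniques''; the maximizing $\alpha$'s then furnish the tight example, as you anticipate.

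So the gap is not a wrong idea but unexecuted content: you have not identified which deviation inequalities are the binding ones, not produced the explicit single-variable (in the values) reduction, and not exhibited the expression whose maximum is $1.25913$. Since the entire claim of the lemma is that specific number, a proof that stops at ``solve the LFP and optimize over $(x,y)$'' has not yet proved anything; you need at minimum the two-case reduction, the explicit substituted ratios, and an argument (or certified computation) that their common maximum over $1\ge \alpha_2/\alpha_1 \ge \alpha_3/\alpha_1 \ge 0$ is the stated root, together with verification that the extremizing profile is a genuine Nash equilibrium for the lower bound.
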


\begin{proof}
Fix one permutation $\pi$. If there is an $i$ s.t. $\pi(i) = i$ then
it is easy to show the Price of Anarchy is bounded by $1.25$. This
excludes all but two allocations which we analyze below. They are: (i)
$\pi = [2,3,1]$ and (ii) $\pi = [3,1,2]$.

\emph{Case (i):} $\pi = [2,3,1]$. We can write the price of anarchy as:
$$PoA = \frac{\alpha_1 v_1 + \alpha_2 v_2 + \alpha_3 v_3}{\alpha_3 v_1
+ \alpha_1 v_2 + \alpha_2 v_3}$$
Now, notice that the coefficient of $v_2$ is smaller in the numerator
than in the denominator. The same is true for $v_3$. Now, we use the
following simple observation about ratios: if $a \leq b$ and $v \geq
v'$ then: $\frac{x+av}{y+bv} \leq \frac{x+av'}{y+bv'}$, which is
natural, because decreasing $v$ we decrease the denominator more than
the numerator. Now, we use that technique to bound $v_2$ and $v_3$ in
terms of $v_1$:

\begin{itemize}
 \item $v_2 \geq \frac{\alpha_1 - \alpha_3}{\alpha_1} v_1$
 \item $v_3 \geq \frac{\alpha_2 - \alpha_3}{\alpha_2}v_1$
\end{itemize}
The first inequality comes from the Nash inequalities $\alpha_3(v_1 -
0) \geq \alpha_1(v_1 - b_2) \geq \alpha_1(v_1 - v_2)$ and the second
comes from the fact that $\alpha_3(v_1 - 0) \geq \alpha_2(v_1 - b_3)
\geq  \alpha_2(v_1 - v_3)$. Now, we get:

\begin{equation}\label{3-slot-eqn}
PoA \leq \frac{\alpha_1 v_1 + \alpha_2 \left[ \frac{\alpha_1 -
\alpha_3}{\alpha_1} v_1 \right] + \alpha_3 \left[ \frac{\alpha_2 -
\alpha_3}{\alpha_2}v_1 \right] }{\alpha_3 v_1 + \alpha_1
\left[\frac{\alpha_1 - \alpha_3}{\alpha_1}  v_1 \right] + \alpha_2
\left[ \frac{\alpha_2 - \alpha_3}{\alpha_2}v_1 \right] }
\end{equation}

Which allows us to eliminate $v_1$ and optimize for $\alpha$. By
standard techniques one can prove that the optimum is $1.25913$ which
is the root of a fourth degree equation. The values for which it is
maximized are $\alpha_1 = 1, \alpha_2 = 0.55079, \alpha_3 = 0.4704$.

\emph{Case (ii):} $\pi = [3,1,2]$. We can write the price of anarchy as:
$$PoA = \frac{\alpha_1 v_1 + \alpha_2 v_2 + \alpha_3 v_3}{\alpha_2 v_1
+ \alpha_3 v_2 + \alpha_1 v_3}$$
and again we use the same trick of realizing that $v_1 \leq
\frac{\alpha_1}{\alpha_1 - \alpha_2} v_3$ by the fact that player $1$
doesn't want to get the first slot, and $v_2 \leq
\frac{\alpha_1}{\alpha_1 - \alpha_3} v_3$ by the fact that player $2$
doesn't want to take the first slot. That gives us:

$$PoA \leq \frac{\alpha_1 \left[ \frac{\alpha_1}{\alpha_1 - \alpha_2}
v_3 \right] + \alpha_2 \left[  \frac{\alpha_1}{\alpha_1 - \alpha_3}
v_3 \right] + \alpha_3 v_3}{\alpha_2 \left[ \frac{\alpha_1}{\alpha_1 -
\alpha_2} v_3 \right] + \alpha_3 \left[  \frac{\alpha_1}{\alpha_1 -
\alpha_3} v_3 \right] + \alpha_1 v_3}$$
which has the same solution $1.25913$ when maximized. Now, it is
maximized for $\alpha_1 = 1, \alpha_2 = 0.5295, \alpha_3 = 0.1458$. In
fact, it is not hard to see that those two PoA expressions have the
same maximum: given a point $(1,\alpha_2, \alpha_3)$ (wlog we can
consider $\alpha_1 = 1$ because the expression is homogeneous), the
second expressions evaluates to the same value in the point
$(1,1-\alpha_3,\frac{\alpha_2-\alpha_3}{\alpha_2})$.
\end{proof}

We proved that $1.25913$ is the tight Price of Anarchy for $3$ slots
(we can use the optimization results in Case(i) to generate a tight
example). We also conjecture that this is the correct Price of Anarchy
for any $n\geq 3$. Moreover, we conjecture that the allocation
maximizing the Price of Anarchy for $n$ slots is $\pi =
[2,3,4,\hdots,n,1]$, i.e., the player with higher value takes the
bottom slot and all players $i > 1$ take slot $i-1$. Then, if this is
the case, we can prove our desired theorem by showing the following
result:

\begin{lemma}
 If an equilibrium with $n$ players and $n$ slots is such that
$\sigma(1) = n$ and $\sigma(i) = i-1$ for the other players, then the
Price of Anarchy is $1.25913$.
\end{lemma}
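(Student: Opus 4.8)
The plan is to mimic Case (i) of Lemma~\ref{3-slot-lemma} for the permutation $\pi=[2,3,\dots,n,1]$ and then reduce the resulting $\alpha$-optimization to the three-slot computation. First I would label the players so that $v_1\geq v_2\geq\dots\geq v_n$; then the optimal assignment is the identity, $OPT(\vals)=\sum_{i=1}^{n}\alpha_i v_i$, while the given equilibrium places player $1$ in slot $n$ and player $i$ in slot $i-1$ for $i\geq 2$, so its welfare is $\alpha_n v_1+\sum_{i=2}^{n}\alpha_{i-1}v_i$. This lets me write
$$PoA=\frac{\alpha_1 v_1+\sum_{i=2}^{n}\alpha_i v_i}{\alpha_n v_1+\sum_{i=2}^{n}\alpha_{i-1}v_i}.$$

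Next I would extract the relevant Nash inequalities, exactly as for $n=3$. Player $1$ occupies the last slot with utility $\alpha_n v_1$ (it pays $0$), and for every $k\in\{1,\dots,n-1\}$ it must not gain by raising its bid to seize slot $k$, a deviation worth $\alpha_k(v_1-b_{k+1})$. With the no-overbidding assumption $b_{k+1}\leq v_{k+1}$ this yields $\alpha_n v_1\geq\alpha_k(v_1-v_{k+1})$, i.e.
$$v_i\;\geq\;\frac{\alpha_{i-1}-\alpha_n}{\alpha_{i-1}}\,v_1\qquad(i=2,\dots,n).$$
For each $i\geq 2$ the coefficient of $v_i$ is $\alpha_i$ in the numerator and $\alpha_{i-1}\geq\alpha_i$ in the denominator, so the ratio observation from Case (i) (lowering a variable whose numerator coefficient is the smaller one can only raise the ratio) permits replacing each $v_i$ by its lower bound. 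One checks that the substituted values still satisfy $v_1\geq\dots\geq v_n$, so the identity remains optimal; after substitution $v_1$ cancels and $PoA$ is bounded by a function $g(\alpha_1,\dots,\alpha_n)$ that is homogeneous of degree $0$, so I may normalize $\alpha_1=1$.

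The remaining step is to maximize $g$ over $\alpha_1\geq\dots\geq\alpha_n\geq 0$ and to show the maximum equals $1.25913$ for every $n$. This is the genuinely hard part: it is an $n$-variable optimization whose objective couples the rates through the cross-terms $\alpha_i/\alpha_{i-1}$, and nothing obvious rules out extra slots helping the adversary. My intended route is a collapsing argument. Fixing all coordinates except one interior $\alpha_i$ ($2\leq i\leq n-1$), I would argue that $g$ is monotone or convex on the feasible interval $[\alpha_{i+1},\alpha_{i-1}]$, so that its maximum over that interval is attained at an endpoint, i.e.\ when $\alpha_i$ merges with a neighbouring rate. Iterating this across the interior coordinates forces an optimizer with only a few distinct values, reducing the problem to the low-dimensional case already solved for $n=3$, whose optimum is the quartic root $1.25913$. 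The matching lower bound I would obtain by exhibiting the $n$-slot analogue of the Case (i) extremal configuration (the three-slot optimizer padded with coincident rates), so that $1.25913$ is actually attained.

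I expect the coordinatewise monotonicity/convexity claim to be the crux. Determining the sign of $\partial g/\partial\alpha_i$ (or establishing convexity) on each feasible interval is the step most likely to demand careful computation, and it is quite possible that, as in the $n=3$ analysis, the final optimization of the reduced low-dimensional expression must be completed by the same explicit (partly numerical) calculation that produced the quartic $1.25913$.
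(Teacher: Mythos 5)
Your proposal follows essentially the same route as the paper: the same PoA expression, the same Nash/no-overbidding inequalities $v_i \geq \frac{\alpha_{i-1}-\alpha_n}{\alpha_{i-1}}v_1$ obtained from player $1$'s deviations to higher slots, and the same ratio-monotonicity substitution reducing everything to an optimization over the $\alpha_i$ alone. The paper likewise leaves that final optimization to ``standard techniques'' (merely asserting that the $n$-slot optimizer is the $3$-slot one padded with coincident rates), so your proposed collapsing argument is a plausible plan for precisely the step the paper itself only sketches.
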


\begin{proof}
 Following a proof scheme similar to used in Lemma \ref{3-slot-lemma}
we can write:
$$PoA = \frac{\alpha_1 v_1 + \sum_{i>1} \alpha_i v_i}{\alpha_n v_1 +
\sum_{i>1} \alpha_{i-1} v_i} \leq \frac{\alpha_1 v_1 + \sum_{i>1}
\alpha_i \left[ \frac{\alpha_{i-1} - \alpha_n}{\alpha_{i-1}} v_1
\right]}{\alpha_n v_1 + \sum_{i>1} \alpha_{i-1} \left[
\frac{\alpha_{i-1} - \alpha_n}{\alpha_{i-1}} v_1 \right]} $$
This boils down to optimizing a function on multiple variables. It can
be shown using standard techniques from optimization that the optimum
is the same of equation \ref{3-slot-eqn}. In fact, if $(\alpha_1,
\alpha_2, 1)$ is a solution to $3$ slots, then $(\alpha_1, \alpha_2,
1, \hdots, 1)$ is a solution for $n$ slots.
\end{proof}

%

\bibliography{sigproc}
\bibliographystyle{plain}

\end{document}